\newtheorem{theorem}{Theorem}
\newtheorem{definition}[theorem]{Definition}
\newtheorem{lemma}[theorem]{Lemma}
\newtheorem{proposition}[theorem]{Proposition}
\newenvironment{proof}[1][Proof]{\noindent\textbf{#1.} }{\ \rule{0.5em}{0.5em}}
\begin{document}

\title[Lower bound for the distance between classical and quantum spin correlations]{Lower bound for the mean square distance between classical and quantum spin correlations}
\author{G Gr\"{u}bl and L Wurzer}
\address{Theoretical Physics Institute, Innsbruck University,\\Technikerstr. 25, A-6020 Innsbruck, Austria}
\ead{\mailto{gebhard.gruebl@uibk.ac.at}, \mailto{lukas.wurzer@uibk.ac.at}}

\begin{abstract}
Bell's theorem prevents local Kolmogorov-simulations of the singlet state of
two spin-1/2 particles. We derive a positive lower bound for the $L^{2}%
$-distance between the quantum mechanical spin singlet anticorrelation
function $\cos$ and any of its classical approximants $C$ formed by the
statio\-nary autocorrelation functions of mean-square-continuous, $2\pi
$-periodic, $\pm1$-valued, stochastic processes. This bound is given by
$\left\Vert C-\cos\right\Vert \geq\left(  1-\frac{8}{\pi^{2}}\right)
/\sqrt{2}\approx0.133\,95.$

\end{abstract}
\pacs{03.65Ud}

\section{Introduction}

Consider a Stern-Gerlach correlation experiment performed on the two distant
members of a spin-singlet state of two spin 1/2 particles. The two
apparatuses' magnetic fields point into the directions $a,$ and $b.$ Quantum
mechanics tells that the stochastic correlation $-Q\left(  a,b\right)  $
between the two measurements' results $\pm1$ is given by $-a\cdot b,$ the
negative scalar product between the two analyzing directions' unit vectors $a
$ and $b.$ To get rid of the minus sign we employ in the following the
anticorrelation function $Q.$

Bell demonstrated that any local Kolmogorov simulation of such experiments is
at variance with $Q\left(  a,b\right)  =a\cdot b.$ \cite{Bel64} The basic
reason for this being his inequality, which subjects any triple of values
$\mathcal{C}\left(  a,b\right)  ,\mathcal{C}\left(  a,c\right)  ,$ and
$\mathcal{C}\left(  b,c\right)  $ of the autocorrelation function
$\mathcal{C}:\mathbb{S}^{2}\times\mathbb{S}^{2}\rightarrow\left[  -1,1\right]
,\left(  a,b\right)  \mapsto\left\langle f_{a}f_{b}\right\rangle $ of any
local Kolmogorov simulation to%
\[
\left\vert \mathcal{C}\left(  a,b\right)  -\mathcal{C}\left(  a,c\right)
\right\vert \leq1-\mathcal{C}\left(  b,c\right)  .
\]
The quantum mechanical function $Q,$ however, does not obey Bell's inequality
for all choices of directions $a,b,c.$ Thus $Q=\mathcal{C}$ cannot hold.

This raises the question for how close the quantum mechanical function $Q$ can
be approximated by a local Kolmogorov autocorrelation function $\mathcal{C}.$
Clearly a notion of \textquotedblleft closeness\textquotedblright\ is needed
in order to make the question meaningful. The mean square deviation between
$\mathcal{C}$ and $Q$ averaged over a suitable set of pairs of directions is a
natural choice, which we adopt. How should one choose the set of directions to
be averaged over? Rotation invariance suggests to keep one point
$a\in\mathbb{S}^{2}$ fixed and to average over all $b$ from a great circle or
equivalently a half great circle through $a,$ because from the $SO\left(
3\right)  $ invariance\footnote{This means that $\mathcal{C}\left(
Ra,Rb\right)  =\mathcal{C}\left(  a,b\right)  $ holds for all directions $a,b$
and for all $R\in SO\left(  3\right)  .$} of $\mathcal{C}$ it follows that
$\mathcal{C}$ is completely determined by the mapping $b\mapsto\mathcal{C}%
\left(  a,b\right)  $ for any fixed $a$ and all directions $b$ lying on a
meridian emanating from $a.$ After all $\mathcal{C}$ is represented by a
function of the (unoriented) angle between $a$ and $b$ only, i.e., there
exists a function $C:\left[  0,\pi\right]  \rightarrow\left[  -1,1\right]  $
such that $\mathcal{C}\left(  a,b\right)  =C\left(  \theta_{a,b}\right)  $ for
all $a,b\in\mathbb{S}^{2}.$ Here $\theta_{a,b}\in\left[  0,\pi\right]  $ is
uniquely determined by $a\cdot b=\cos\theta_{a,b}.$

Within the class of local Kolmogorov autocorrelation functions which are
rotation invariant and continuous we have found that the mean square deviation
of $C$ from its quantum counterpart $\cos$ obeys the estimate%
\[
\sqrt{\frac{1}{\pi}\int_{0}^{\pi}\left\vert C\left(  \theta\right)
-\cos\theta\right\vert ^{2}\rmd\theta}\geq\left(  1-\frac{8}{\pi^{2}}\right)
/\sqrt{2}\approx0.133\,95.
\]

The search for such \textquotedblleft integrated\textquotedblright\ conditions
which constrain the local Kolmogorov approximants $C$ of the quantum
anticorrelation function $\cos$ seems to have been initiated by \.{Z}ukowski,
who in \cite{Zuk93} derived an upper bound for the modulus of the $L^{2}%
$-scalar product between $\cos$ and $C.$ See estimate (10) in \cite{Zuk93},
where it was taken for granted that certain integrability conditions are
obeyed by the stochastic variables of the local Kolmogorov simulation under
consideration. In \cite{KaZ00} the idea has been generalized further in order
to cover more general, entangled many particle states.

In a first step we prove a scalar product bound, which in certain cases is
equivalent to \.{Z}ukowski's one. Our bound only holds for the singlet state
but it does not rely on \.{Z}ukowski's integrability assumption. The class of
Kolmogorov simulations for which it holds is specified precisely in our lemma
5. From this we then derive in proposition 6 a lower bound for the mean square
distance between $C$ and $\cos.$ Although fairly straightforward this latter
bound seems to have remained unnoticed till now. It could serve the following purpose.

Usually the decision upon the possibility of a local Kolmogorov simulation of
a given set of spin singlett correlation measurements typically is based on
the degree of violation of the Clausner-Horne-Shimony-Holt inequality for a
finite set of analyzing field directions. \cite{AGR81} Our inequality might
give a more sensitive criterion as to whether an experimentally determined
singlet correlation function admits a local Kolmogorov simulation, since it
involves the sum of (squared) deviations at a sufficient dense set of
analyzing directions. If the mean square distance of the experimental data
points from the quantum mechanically predicted values is estimated within a
certain confidence level to be less than $0.13395,$ then a local Kolmogorov
simulation can be ruled out within the same level of confidence. And this may
well be the case even if all the individual vioaltions of the CHSH-inequality
do not rule out locality with sufficient confidence.

\section{Local Kolmogorov-simulation}

Let $\left(  \chi_{+},\chi_{-}\right)  $ denote an orthonormal basis of
$\mathbb{C}^{2}.$ Then the vector%
\[
\chi=\frac{1}{\sqrt{2}}\left(  \chi_{+}\otimes\chi_{-}-\chi_{-}\otimes\chi
_{+}\right)
\]
represents the spin-singlet state of two spin-1/2 particles. A Stern-Gerlach
experiment on each of the two constituents of the singlet state, which are
assumed to be approximately localized in well separated regions, measures the
two observables $A=\sigma\left(  a\right)  \otimes id$ and $B=id\otimes
\sigma\left(  b\right)  $ where $a,b\in\mathbb{S}^{2}\subset\mathbb{R}^{3}$
are two arbitrary vectors from the unit sphere $\mathbb{S}^{2}.$ The unit
vectors $a,b$ specify the Stern-Gerlach apparatuses' orientations. For
$a=\left(  a^{1},a^{2},a^{3}\right)  \in\mathbb{R}^{3}$ holds%
\[
\sigma\left(  a\right)  =\left(
\begin{array}
[c]{cc}%
a^{3} & a^{1}-\rmi a^{2}\\
a^{1}+\rmi a^{2} & -a^{3}%
\end{array}
\right)  .
\]

The possible results of such measurements are the pairs $\left(
\varepsilon,\eta\right)  \in\left\{  1,-1\right\}  \times\left\{
1,-1\right\}  .$ The quantum mechanically determined probability of outcome
$\left(  \varepsilon,\eta\right)  $ is given in terms of the euclidean scalar
product $a\cdot b$ between $a$ and $b$ through%
\begin{equation}
p_{a,b}\left(  \varepsilon,\eta\right)  =\frac{1-\varepsilon\eta\ a\cdot b}%
{4}.\label{Qprob}%
\end{equation}
This probability does not depend on the time order of the two Stern-Gerlach
experiments since $\left[  A,B\right]  =0.$

Thus any choice of two directions $a,b$ determines a probability measure on
the event space $\left\{  1,-1\right\}  \times\left\{  1,-1\right\}  .$ The
situation is analogous to a random experiment in which a pair of widely
separated coins is tossed, while each coin of the pair is exposed to a
magnetic field of direction $a$ and $b$ respectively. And the magnetic fields
take an influence on the distribution of the possible results.

The family of probability functions $\left\{  p_{a,b}:a,b\in\mathbb{S}%
^{2}\right\}  $ can be obtained as the distributions of stochastic variables
$\left\{  X_{a,b}:a,b\in\mathbb{S}^{2}\right\}  $%
\[
X_{a,b}=\left(  f_{a,b},g_{a,b}\right)  :\Omega\rightarrow\left\{
1,-1\right\}  \times\left\{  1,-1\right\}
\]
on a \emph{single} Kolmogorov probability space $\left(  \Omega,W\right)  $ as
follows. \cite{Wig70} Take as the space of events $\Omega$ the square $\left[
0,1\right]  \times\left[  0,1\right]  $ with the uniform distribution $W.$
Then decompose the square $\Omega$ into a set of four nonoverlapping
rectangles, i.e.
\[
\Omega=\bigcup\nolimits_{\varepsilon,\eta\in\left\{  1,-1\right\}  }%
R_{a,b}\left(  \varepsilon,\eta\right)  ,
\]
such that the area of the rectangle $R_{a,b}\left(  \varepsilon,\eta\right)  $
has the value $p_{a,b}\left(  \varepsilon,\eta\right)  .$ Finally assume%
\[
X_{a,b}\left(  \omega\right)  =\left(  \varepsilon,\eta\right)  \mbox{ for 
}\omega\in R_{a,b}\left(  \varepsilon,\eta\right)  .
\]
Then, by construction we obviously have%
\[
p_{a,b}\left(  \varepsilon,\eta\right)  =W\left(  \left\{  \omega\in
\Omega:X_{a,b}\left(  \omega\right)  =\left(  \varepsilon,\eta\right)
\right\}  \right)  .
\]
In such a model the outcome of each coin flip is determined by a randomly
chosen \textquotedblleft hidden variable\textquotedblright\ $\omega\in\Omega$
in a way that in general the outcome also depends on both of the magnetic
fields $a$ and $b!$ How come that a coin gets influenced by faraway circumstances?

Trying to save locality, Bell considered the question whether there exists a
probability space $\left(  \Omega,W\right)  $ with a stochastic variable
\[
X_{a,b}=\left(  f_{a},g_{b}\right)  :\Omega\rightarrow\left\{  1,-1\right\}
\times\left\{  1,-1\right\}
\]
for each pair $\left(  a,b\right)  \in\mathbb{S}^{2}\times\mathbb{S}^{2}$ such
that%
\begin{equation}
p_{a,b}\left(  \varepsilon,\eta\right)  =W\left(  \left\{  \omega\in
\Omega:f_{a}\left(  \omega\right)  =\varepsilon\mbox{ and }g_{b}\left(
\omega\right)  =\eta\right\}  \right)  \mbox{ for all }a,b\in\mathbb{S}
^{2}.\label{Kprob}%
\end{equation}

Such structure would constitute a \emph{local Kolmogorov simulation} of the
probability distributions generated by composite Stern-Gerlach experiments on
a spin singlet state. The assumption that the stochastic variables
$f_{a},g_{b}$ depend on their respective Stern-Gerlach orientation $a$ or $b$
only is made in order to take care of the principle of locality.

Now the equations (\ref{Qprob}) and (\ref{Kprob}) imply
\[
\left\langle f_{a}g_{b}\right\rangle =\sum_{\varepsilon,\eta\in\left\{
1,-1\right\}  }\varepsilon\eta p_{a,b}\left(  \varepsilon,\eta\right)
=-a\cdot b
\]
for all $a,b\in\mathbb{S}^{2}.$ Yet Bell's theorem \cite{Bel64} rules out
exactly that $\left\langle f_{a}g_{b}\right\rangle =-a\cdot b$ for all
$a,b\in\mathbb{S}^{2}.$ Therefore a local Kolmogorov simulation of the singlet
state does not exist. For the sake of completeness we spell out Bell's theorem precisely.

\begin{theorem}
[Bell]Let $\left(  \Omega,W\right)  $ be a probability space with two stochastic
variables $f_{a},g_{a}:\Omega\rightarrow\left\{  1,-1\right\}  $ for every
$a\in\mathbb{S}^{2}.$ Then there exist points $a,b\in\mathbb{S}^{2}$ such that $\left\langle f_{a}g_{b}\right\rangle \neq-a\cdot b.$

\end{theorem}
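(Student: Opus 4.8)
The plan is to argue by contradiction. Suppose that $\left\langle f_{a}g_{b}\right\rangle =-a\cdot b$ for every pair $a,b\in\mathbb{S}^{2}$, and derive a contradiction from a suitably chosen triple of directions.

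First I would exploit the diagonal case $a=b$. The product $f_{a}g_{a}$ is a $\left\{1,-1\right\}$-valued random variable whose expectation equals $-a\cdot a=-1$; hence $f_{a}g_{a}=-1$ $W$-almost everywhere, i.e.\ $g_{a}=-f_{a}$ almost surely for each fixed $a\in\mathbb{S}^{2}$. Consequently $\left\langle f_{a}f_{b}\right\rangle =-\left\langle f_{a}g_{b}\right\rangle =a\cdot b$ for all $a,b$, so the problem is reduced to the single family $\left\{f_{a}:a\in\mathbb{S}^{2}\right\}$ and its autocorrelation.

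Next I would prove Bell's inequality for this autocorrelation. Fix three directions $a,b,c$. Since $f_{b}\left(\omega\right)^{2}=1$ for every $\omega$, one has the pointwise identity $f_{a}f_{b}-f_{a}f_{c}=f_{a}f_{b}\left(1-f_{b}f_{c}\right)$, whence $\left\vert f_{a}f_{b}-f_{a}f_{c}\right\vert \leq 1-f_{b}f_{c}$ because $f_{b}f_{c}\leq 1$. Taking expectations gives $\left\vert \left\langle f_{a}f_{b}\right\rangle -\left\langle f_{a}f_{c}\right\rangle \right\vert \leq 1-\left\langle f_{b}f_{c}\right\rangle$. Here one should observe that the three $W$-null sets on which $g_{a}=-f_{a}$, $g_{b}=-f_{b}$, resp.\ $g_{c}=-f_{c}$ fails have null union, so the evaluations $\left\langle f_{x}f_{y}\right\rangle =x\cdot y$ for $x,y\in\left\{a,b,c\right\}$ remain at our disposal.

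Finally I would pick the three unit vectors coplanar with $\theta_{a,b}=\theta_{b,c}=\pi/3$ and $\theta_{a,c}=2\pi/3$, so that $a\cdot b=b\cdot c=1/2$ and $a\cdot c=-1/2$. Inserting the values $\left\langle f_{x}f_{y}\right\rangle =x\cdot y$ into Bell's inequality yields $1=\left\vert 1/2-\left(-1/2\right)\right\vert \leq 1-1/2=1/2$, which is absurd. Therefore the initial assumption is untenable and some pair $a,b$ must satisfy $\left\langle f_{a}g_{b}\right\rangle \neq-a\cdot b$. The only point demanding care is the measure-theoretic bookkeeping in the first two steps --- legitimising the almost-sure identifications $g_{a}=-f_{a}$ and noting that finitely many of them may be used simultaneously at no cost; once that is in place the contradiction is the classical three-direction computation.
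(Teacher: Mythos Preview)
Your argument is correct and follows the same route as the paper's proof: assume $\langle f_{a}g_{b}\rangle=-a\cdot b$ throughout, use the diagonal to get $g_{a}=-f_{a}$ almost surely, derive Bell's inequality from the identity $f_{a}f_{b}-f_{a}f_{c}=f_{a}f_{b}(1-f_{b}f_{c})$, and then exhibit a contradicting triple of coplanar directions. The only cosmetic difference is your choice of angles $(\pi/3,\pi/3,2\pi/3)$ versus the paper's $(\pi/4,\pi/4,\pi/2)$; either produces the desired contradiction.
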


\begin{proof}
That $\left\langle f_{a}g_{b}\right\rangle =-a\cdot b$ cannot hold for all
$a,b\in\mathbb{S}^{2}$ may be proven by contradiction. Choosing $b=a$ the
equation $\left\langle f_{a}g_{a}\right\rangle =-1$ implies $g_{a}=-f_{a}$ in
the sense of stochastic variables, i.e., almost everywhere on $\Omega.$ Thus,
assuming $\left\langle f_{a}g_{b}\right\rangle =-a\cdot b$ for all
$a,b\in\mathbb{S}^{2}$ leads to%
\[
-\left\langle f_{a}g_{b}\right\rangle =\left\langle f_{a}f_{b}\right\rangle
=a\cdot b
\]
for all $a,b\in\mathbb{S}^{2}.$ Bell noticed that, due to $f_{b}^{2}=1,$ there
holds%
\[
\left\vert \left\langle f_{a}f_{b}\right\rangle -\left\langle f_{a}%
f_{c}\right\rangle \right\vert =\left\vert \left\langle f_{a}f_{b}\left(
1-f_{b}f_{c}\right)  \right\rangle \right\vert \leq1-\left\langle f_{b}%
f_{c}\right\rangle
\]
for all $a,b,c\in\mathbb{S}^{2}.$ Bell's famous inequality, however, is in
contradiction with $\left\langle f_{a}f_{b}\right\rangle =a\cdot b.$ Choose,
e.g., three coplanar vectors $a,b,c$ with $a\cdot b=\frac{1}{\sqrt{2}}=b\cdot
c$ and $a\cdot c=0.$ Then
\[
\frac{1}{\sqrt{2}}=\left\vert a\cdot b-a\cdot c\right\vert \leq1-b\cdot
c=\frac{\sqrt{2}-1}{\sqrt{2}}
\]
and thus the falsity $2\leq\sqrt{2}$ follows from Bell's inequality.
\end{proof}

\section{Quality of a classical singlet model}

Bell's theorem poses the following problem: Determine the infimum of the set
of numbers%
\[
\frac{1}{\left(  4\pi\right)  ^{2}}\int_{\mathbb{S}^{2}\times\mathbb{S}^{2}%
}\left\vert \left\langle f_{a}g_{b}\right\rangle -\left[  -a\cdot b\right]
\right\vert ^{2}\rmd a\ \rmd b,
\]
obtained from all \emph{local} Kolmogorov models of a spin singlet state. Any
such model\footnote{We denote such models as \emph{classical singlet models}.}
consists of a probability space $\left(  \Omega,W\right)  $ and two families
of $\left\{  1,-1\right\}  $-valued stochastic variables $\left\{  f_{a}%
:a\in\mathbb{S}^{2}\right\}  $ and $\left\{  g_{a}:a\in\mathbb{S}^{2}\right\}
$ such that $\left\langle f_{a}g_{a}\right\rangle =-1$ for all $a\in
\mathbb{S}^{2}.$ Thus again $g_{a}=-f_{a}$ holds for all $a\in\mathbb{S}^{2}.$
Here $\emph{da}$ and $\emph{db}$ denote the rotation invariant area element on
the unit sphere normalized to $4\pi.$

The solution to this problem would quantify and limit the optimal
approximation to the quantum mechanical covariance function
\[
\mathbb{S}^{2}\times\mathbb{S}^{2}\ni\left(  a,b\right)  \mapsto\left\langle
\chi,\sigma\left(  a\right)  \otimes\sigma\left(  b\right)  \chi\right\rangle
=-a\cdot b
\]
through classical singlet models.

In this paper we address a somewhat simpler but related problem. We first
confine the admissible direction vectors $a,b$ from $\mathbb{S}^{2}$ to a
great circle $\mathbb{S}^{1}\subset\mathbb{S}^{2}.$ Then we restrict to such
$\mathbb{S}^{1}$-parametrized families of stochastic variables $f_{a}=-g_{a}$
for all $a\in\mathbb{S}^{1},$ for which there exists a \emph{continuous}
function $\widetilde{C}:\mathbb{R}\rightarrow\mathbb{R}$ such that%
\[
\left\langle f_{a}f_{b}\right\rangle =\widetilde{C}\left(  a\cdot b\right)
\mbox{ holds for all }a,b\in\mathbb{S}^{1}.
\]

The assumption that $\left\langle f_{a}f_{b}\right\rangle $ depends on the
scalar product $a\cdot b$ only amounts to postulating $O\left(  2\right)
$-invariance for the mapping $\left(  a,b\right)  \mapsto\left\langle
f_{a}f_{b}\right\rangle ,$ i.e., the relation $\left\langle f_{Ra}%
f_{Rb}\right\rangle =\left\langle f_{a}f_{b}\right\rangle $ for each
orthogonal mapping $R$ which stabilizes the great circle $\mathbb{S}^{1}.$
Under these premises we shall derive a positive lower bound for%
\[
\frac{1}{\left(  2\pi\right)  ^{2}}\int_{\mathbb{S}^{1}\times\mathbb{S}^{1}%
}\left\vert \left\langle f_{a}f_{b}\right\rangle -a\cdot b\right\vert
^{2}\rmd a\ \rmd b.
\]
Here $\emph{da}$ and $\emph{db}$ denote the rotation invariant line element on
the unit circle normalized to $2\pi.$

An equivalent but simpler formulation of our problem is obtained by
periodically parametrizing the circle $\mathbb{S}^{1}$ through real numbers
$s$ and $t,$ e.g., such that $a\left(  s\right)  =\left(  \cos\left(
s\right)  ,\sin\left(  s\right)  ,0\right)  $ and $b\left(  s\right)  =\left(
\cos\left(  t\right)  ,\sin\left(  t\right)  ,0\right)  .$ Then there exists a
continuous $2\pi$-periodic function $C:\mathbb{R}\rightarrow\mathbb{R}$ with
$\widetilde{C}\left(  a\cdot b\right)  =C\left(  t-s\right)  $ and we obtain%
\begin{eqnarray*}                                                  
& \frac{1}{\left(  2\pi\right)  ^{2}}\int_{\mathbb{S}^{1}\times\mathbb{S}^{1}%
}\left\vert \left\langle f_{a}f_{b}\right\rangle -a\cdot b\right\vert
^{2}\rmd a\ \rmd b\\
& =\frac{1}{\left(  2\pi\right)  ^{2}}\int_{0}^{2\pi}\int_{0}^{2\pi}\left\vert
C\left(  t-s\right)  -\cos\left(  t-s\right)  \right\vert ^{2}\rmd s\ \rmd t\\
& =\frac{1}{2\pi}\int_{0}^{2\pi}\left\vert C\left(  t\right)  -\cos\left(
t\right)  \right\vert ^{2}\rmd t.
\end{eqnarray*}

Thus we try to find a positive lower bound for%
\[
\frac{1}{2\pi}\int_{0}^{2\pi}\left\vert C\left(  t\right)  -\cos\left(
t\right)  \right\vert ^{2}\rmd t,
\]
where the continuous function $C:\mathbb{R}\rightarrow\mathbb{R}$ is related
to a $2\pi$-periodic, $\left\{  1,-1\right\}  $-valued, stochastic process
$\left\{  f_{s}:s\in\mathbb{R}\right\}  $ through $C\left(  t\right)
=\left\langle f_{s}f_{s+t}\right\rangle $ for all $s,t\in\mathbb{R}.$

Notice that, because of%
\[
C\left(  -t\right)  =\left\langle f_{s}f_{s-t}\right\rangle =\left\langle
f_{s-t}f_{s-t+t}\right\rangle =\left\langle f_{s}f_{s+t}\right\rangle
=C\left(  t\right)  ,
\]
the function $C$ is even and it also obeys $C\left(  0\right)  =1.$

If, in addition to $C\left(  t\right)  =\left\langle f_{s}f_{s+t}\right\rangle
,$ the mapping $s\mapsto\left\langle f_{s}\right\rangle $ is constant, the
process $\left\{  f_{s}:s\in\mathbb{R}\right\}  $ is called stationary in the
wide sense. \cite{Fis63}, \cite{Tod92} In case of $\left\langle f_{s}%
\right\rangle =0$ for all $s\in R,$ the function $C$ specializes to the
autocorrelation function of a wide-sense-stationary process $\left\{
f_{s}:s\in\mathbb{R}\right\}  .$ However we will not need any assumption on
the expectation values $\left\langle f_{s}\right\rangle .$

The following results from the theory of stationary processes make it clear
that insistence on the continuity of $C$ is much less a restriction than it
appears to be. They show that the condition of continuity of $C$ can be
replaced by the seemingly weaker condition that $C$ is continuous at $0.$ See,
e.g., sect. 8.10 from \cite{Fis63}. For the sake of completeness we include
the proofs.

\begin{lemma}
Let $\left\{  f_{s}:s\in\mathbb{R}\right\}  $ be a $\left\{  1,-1\right\}
$-valued stochastic pro\-cess such that there exists a function $C:\mathbb{R}%
\rightarrow\mathbb{R}$ with $C\left(  t\right)  =\left\langle f_{s}%
f_{s+t}\right\rangle $ for all $s,t\in\mathbb{R}.$ Then $C$ is continuous
everywhere if and only if it is continuous at $0.$
\end{lemma}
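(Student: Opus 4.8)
The plan is to pass to the Hilbert space $L^{2}\left(  \Omega,W\right)  $ and use its geometry. Since each $f_{s}$ is $\left\{  1,-1\right\}  $-valued we have $f_{s}^{2}=1$ pointwise, so $f_{s}\in L^{2}\left(  \Omega,W\right)  $ with $\left\Vert f_{s}\right\Vert ^{2}=\left\langle f_{s}^{2}\right\rangle =1$, and the scalar product $\left\langle f_{s},f_{t}\right\rangle _{L^{2}}=\left\langle f_{s}f_{t}\right\rangle =C\left(  t-s\right)  $ is well defined. The implication \textquotedblleft continuous everywhere $\Rightarrow$ continuous at $0$\textquotedblright\ is trivial, so I would assume $C$ continuous at $0$ and first show that the curve $s\mapsto f_{s}\in L^{2}\left(  \Omega,W\right)  $ is continuous.

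The key computation is that for all $s,h\in\mathbb{R}$
\[
\left\Vert f_{s+h}-f_{s}\right\Vert ^{2}=\left\langle f_{s+h}^{2}\right\rangle -2\left\langle f_{s+h}f_{s}\right\rangle +\left\langle f_{s}^{2}\right\rangle =2\left(  1-C\left(  h\right)  \right)  ,
\]
where I use $f_{s}^{2}=f_{s+h}^{2}=1$ and $\left\langle f_{s+h}f_{s}\right\rangle =\left\langle f_{s}f_{s+h}\right\rangle =C\left(  h\right)  $. Because $C\left(  0\right)  =1$, continuity of $C$ at $0$ gives $\left\Vert f_{s+h}-f_{s}\right\Vert \to0$ as $h\to0$, uniformly in $s$; thus $s\mapsto f_{s}$ is mean-square continuous on $\mathbb{R}$.

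Finally I would bound the increments of $C$ itself. Taking $s=0$ in the hypothesis gives $C\left(  t\right)  =\left\langle f_{0}f_{t}\right\rangle =\left\langle f_{0},f_{t}\right\rangle _{L^{2}}$, so for any $t,t^{\prime}\in\mathbb{R}$ the Cauchy--Schwarz inequality yields
\[
\left\vert C\left(  t\right)  -C\left(  t^{\prime}\right)  \right\vert =\left\vert \left\langle f_{0},f_{t}-f_{t^{\prime}}\right\rangle _{L^{2}}\right\vert \leq\left\Vert f_{0}\right\Vert \,\left\Vert f_{t}-f_{t^{\prime}}\right\Vert =\sqrt{2\left(  1-C\left(  t-t^{\prime}\right)  \right)  }.
\]
As $t^{\prime}\to t$ the argument $t-t^{\prime}\to0$, and continuity of $C$ at $0$ forces the right-hand side to $0$; hence $C$ is continuous at the arbitrary point $t$, and therefore continuous on all of $\mathbb{R}$. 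There is no real obstacle here: the only point needing care is that one is \emph{a priori} allowed to invoke continuity of $C$ only near the origin, which is exactly why the last estimate is arranged so that the modulus of continuity of $C$ at $t$ is controlled by that of $C$ at $0$. The identity $\left\Vert f_{s+h}-f_{s}\right\Vert ^{2}=2\left(  1-C\left(  h\right)  \right)  $, together with the $s$-independence of $\left\langle f_{s}f_{s+t}\right\rangle $, is what collapses the whole argument to a one-line Cauchy--Schwarz bound.
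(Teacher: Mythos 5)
Your proof is correct and is essentially the paper's own argument: both rest on writing $C(t)-C(t')=\langle f_{0}\left(f_{t}-f_{t'}\right)\rangle$, applying Cauchy--Schwarz, and using $f_{s}^{2}=1$ together with stationarity to get $\langle\left(f_{t}-f_{t'}\right)^{2}\rangle=2\left(1-C\left(t-t'\right)\right)=2\left(C\left(0\right)-C\left(t-t'\right)\right)$. The Hilbert-space phrasing is only a cosmetic repackaging of the same computation.
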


\begin{proof}
Observe first that%
\[
C\left(  t+\varepsilon\right)  -C\left(  t\right)  =\left\langle
f_{0}f_{t+\varepsilon}\right\rangle -\left\langle f_{0}f_{t}\right\rangle
=\left\langle f_{0}\left(  f_{t+\varepsilon}-f_{t}\right)  \right\rangle .
\]
Thus we have, due to the Cauchy-Schwarz inequality and due to $f_{t}^{2}=1,$
that%
\begin{eqnarray*}
\left\vert C\left(  t+\varepsilon\right)  -C\left(  t\right)  \right\vert
^{2}  & =\left\vert \left\langle f_{0}\left(  f_{t+\varepsilon}-f_{t}\right)
\right\rangle \right\vert ^{2}\leq\left\langle f_{0}^{2}\right\rangle
\left\langle \left(  f_{t+\varepsilon}-f_{t}\right)  ^{2}\right\rangle \\
& =\left\langle \left(  f_{t+\varepsilon}-f_{t}\right)  ^{2}\right\rangle
=\left\langle f_{t+\varepsilon}^{2}+f_{t}^{2}-2f_{t}f_{t+\varepsilon
}\right\rangle \\
& =2\left(  1-C\left(  \varepsilon\right)  \right)  =2\left(  C\left(
0\right)  -C\left(  \varepsilon\right)  \right)  .
\end{eqnarray*}
From this it follows that%
\[
\lim_{\varepsilon\rightarrow0}\left\vert C\left(  t+\varepsilon\right)
-C\left(  t\right)  \right\vert \leq\sqrt{2\lim_{\varepsilon\rightarrow
0}\left(  C\left(  0\right)  -C\left(  \varepsilon\right)  \right)  }.
\]
Thus $\lim_{\varepsilon\rightarrow0}\left\vert C\left(  t+\varepsilon\right)
-C\left(  t\right)  \right\vert =0$ if $\lim_{\varepsilon\rightarrow0}C\left(
\varepsilon\right)  =C\left(  0\right)  .$ Clearly, if $C$ is continuous
everywhere it is in particular continuous at $0.$
\end{proof}

\begin{definition}
A stochastic process $\left\{  f_{s}:s\in\mathbb{R}\right\}  ,$ with
$\lim_{\varepsilon\rightarrow0}\left\langle \left(  f_{t+\varepsilon}%
-f_{t}\right)  ^{2}\right\rangle =0$ for all $t\in\mathbb{R}$ is called mean-square-continuous.
\end{definition}

\begin{lemma}
A stochastic process $\left\{  f_{s}:s\in\mathbb{R}\right\}  $ with $C\left(
t\right)  =\left\langle f_{s}f_{s+t}\right\rangle $ for all $s,t\in\mathbb{R}
$ with values in $\left\{  1,-1\right\}  $ is mean-square-continuous if and
only if $C$ is continuous at $0.$
\end{lemma}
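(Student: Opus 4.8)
The plan is to exploit the very same elementary identity that already appeared in the proof of the first lemma, namely the evaluation of the mean square increment of the process in terms of $C$ alone. First I would expand
\[
\left\langle \left( f_{t+\varepsilon} - f_t \right)^2 \right\rangle
= \left\langle f_{t+\varepsilon}^2 \right\rangle + \left\langle f_t^2 \right\rangle - 2 \left\langle f_t f_{t+\varepsilon} \right\rangle ,
\]
and then use that the process takes values in $\left\{ 1,-1\right\}$, so that $f_t^2 = 1$ pointwise and hence $\left\langle f_t^2 \right\rangle = \left\langle f_{t+\varepsilon}^2 \right\rangle = 1$, together with $\left\langle f_t f_{t+\varepsilon}\right\rangle = C\left(\varepsilon\right)$ and $C\left(0\right) = 1$. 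This yields
\[
\left\langle \left( f_{t+\varepsilon} - f_t \right)^2 \right\rangle = 2 \left( 1 - C\left(\varepsilon\right) \right) = 2 \left( C\left(0\right) - C\left(\varepsilon\right) \right) ,
\]
an expression that is manifestly independent of $t$.

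From this identity both implications are immediate. If $C$ is continuous at $0$, then the right-hand side tends to $0$ as $\varepsilon\rightarrow0$, so $\lim_{\varepsilon\rightarrow0}\left\langle \left( f_{t+\varepsilon} - f_t \right)^2 \right\rangle = 0$ for every $t\in\mathbb{R}$ --- in fact uniformly in $t$ --- which is precisely mean-square-continuity in the sense of the preceding definition. Conversely, if the process is mean-square-continuous, then specializing to any single $t$, say $t=0$, gives $2\left( C\left(0\right) - C\left(\varepsilon\right)\right)\rightarrow 0$, i.e.\ $C\left(\varepsilon\right)\rightarrow C\left(0\right)$, which is continuity of $C$ at $0$.

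I do not expect any genuine obstacle here: the statement is essentially a corollary of the computation already carried out for the first lemma, the only point worth spelling out being that the $\left\{1,-1\right\}$-valuedness forces $\left\langle f_t^2\right\rangle = 1$, so that the mean square increment collapses to the $t$-independent quantity $2\left( C\left(0\right) - C\left(\varepsilon\right)\right)$. It is worth remarking in passing that, combined with the first lemma, this establishes that for such processes the three conditions --- mean-square-continuity, continuity of $C$ at $0$, and continuity of $C$ on all of $\mathbb{R}$ --- are mutually equivalent, which is the fact alluded to in the text preceding the first lemma.
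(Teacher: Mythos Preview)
Your proof is correct and follows exactly the same approach as the paper: expand the mean square increment, use $f_t^2=1$ to reduce it to $2\left(C(0)-C(\varepsilon)\right)$, and read off both implications. If anything, you spell out the two directions more explicitly than the paper does, which simply displays the identity and leaves the equivalence implicit.
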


\begin{proof}
In the case of a $\left\{  1,-1\right\}  $-valued process with stationary
correlation function $C$ holds%
\begin{eqnarray*}
\lim_{\varepsilon\rightarrow0}\left\langle \left(  f_{t+\varepsilon}%
-f_{t}\right)  ^{2}\right\rangle  & =\lim_{\varepsilon\rightarrow0}\left[
\left\langle f_{t+\varepsilon}^{2}\right\rangle +\left\langle f_{t}%
^{2}\right\rangle -2\left\langle f_{t+\varepsilon}f_{t}\right\rangle \right]
=2\lim_{\varepsilon\rightarrow0}\left(  1-C\left(  \varepsilon\right)  \right)
\\
& =2\lim_{\varepsilon\rightarrow0}\left(  C\left(  0\right)  -C\left(
\varepsilon\right)  \right)  .
\end{eqnarray*}

\end{proof}

Thus in the case of a $\left\{  1,-1\right\}  $-valued stochastic process
$\left\{  f_{s}:s\in\mathbb{R}\right\}  ,$ for which there exists a function
$C:\mathbb{R}\rightarrow\mathbb{R}$ such that $C\left(  t\right)
=\left\langle f_{s}f_{s+t}\right\rangle $ holds for all $s,t\in\mathbb{R},$
the following three conditions are equivalent:

\begin{enumerate}
\item $C$ is continuous.

\item $C$ is continuous at $0.$

\item $\left\{  f_{s}:s\in\mathbb{R}\right\}  $ is mean-square-continuous.
\end{enumerate}

\section{Lower bound for the quality}

Our main tool is the following estimate for the Fourier coefficients of the
correlation functions $C$ which appear in the present context.

\begin{lemma}
Let $\left\{  f_{s}:s\in\mathbb{R}\right\}  $ be a $2\pi$-periodic, $\left\{
1,-1\right\}  $-valued, mean-square-con\-tinuous stochastic process such that
$C\left(  t\right)  =\left\langle f_{s}f_{s+t}\right\rangle $ holds for all
$s,t\in\mathbb{R}.$ Then all the Fourier coefficients $c_{k}$ of $C$ exist and
for all $k\in\mathbb{Z}$ the following estimates hold%
\[
0\leq c_{k}=\frac{1}{2\pi}\int_{0}^{2\pi}\rme^{-ikt}C\left(  t\right)
\rmd t=\frac{1}{2\pi}\int_{0}^{2\pi}\cos\left(  kt\right)  C\left(
t\right)  \rmd t\leq\left(  \frac{2}{\pi}\right)  ^{2}.
\]

\end{lemma}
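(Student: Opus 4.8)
The plan is to recognise $c_k$ as a second moment of a \emph{stochastic Fourier coefficient}. Passing once and for all to a jointly measurable modification of $\{f_s:s\in\mathbb{R}\}$ --- which exists since a mean-square-continuous process is continuous in probability --- I would, for each $k\in\mathbb{Z}$, introduce the random variable
\[
F_k=\frac{1}{2\pi}\int_0^{2\pi}\rme^{-iks}f_s\,\rmd s ,
\]
read, for almost every $\omega$, as an ordinary Lebesgue integral of the now measurable path $s\mapsto f_s(\omega)$. The argument then splits into two parts: (i) the identity $c_k=\langle|F_k|^2\rangle$, which yields at one stroke the existence of $c_k$ and the lower bound $c_k\ge0$; and (ii) the pointwise estimate $|F_k|\le 2/\pi$, which holds because $f_s$ takes only the values $\pm1$ and which forces $c_k=\langle|F_k|^2\rangle\le(2/\pi)^2$.

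For part (i) every integrand in sight has modulus at most $1$, so Fubini's theorem applies; using the stationarity relation $\langle f_sf_{s'}\rangle=C(s-s')$ one obtains
\[
\langle|F_k|^2\rangle=\frac{1}{(2\pi)^2}\int_0^{2\pi}\int_0^{2\pi}\rme^{-ik(s-s')}\,C(s-s')\,\rmd s\,\rmd s'=c_k ,
\]
the last equality because $\rme^{-ikt}C(t)$ is $2\pi$-periodic, so that the inner integral equals $2\pi c_k$ for every $s'$. Hence $c_k$ exists and is non-negative. That $c_k$ also equals $\frac{1}{2\pi}\int_0^{2\pi}\cos(kt)\,C(t)\,\rmd t$ is then elementary: $C$ is even and $2\pi$-periodic, whence $\int_0^{2\pi}\sin(kt)\,C(t)\,\rmd t=\int_{-\pi}^{\pi}\sin(kt)\,C(t)\,\rmd t=0$.

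For part (ii) --- the substantive half --- I would exploit that, after the reduction above, the path $s\mapsto f_s(\omega)$ is for almost every $\omega$ a measurable function taking values in $\{1,-1\}$; this is Fubini once more, the exceptional set $\{(s,\omega):f_s(\omega)\notin\{1,-1\}\}$ being null because each of its vertical sections is. Fixing such an $\omega$ and writing $F_k(\omega)=|F_k(\omega)|\,\rme^{i\psi}$, a glance at the real part gives, for $k\ne0$,
\[
|F_k(\omega)|=\frac{1}{2\pi}\int_0^{2\pi}\cos(ks+\psi)\,f_s(\omega)\,\rmd s\le\frac{1}{2\pi}\int_0^{2\pi}|\cos(ks+\psi)|\,\rmd s=\frac{2}{\pi} ,
\]
because the integral of $|\cos|$ over any whole number of periods equals $4$; therefore $c_k=\langle|F_k|^2\rangle\le(2/\pi)^2$. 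For $k=0$ the same computation yields only $|F_0|\le1$, and the sharper bound $c_0\le(2/\pi)^2$ uses the antipodal relation $f_{s+\pi}=-f_s$ of the singlet model on the circle --- equivalently $C(\cdot+\pi)=-C(\cdot)$ --- which makes $F_0=0$. The single delicate point is the measure-theoretic groundwork underlying part (ii): securing the jointly measurable modification and checking that the pathwise integral legitimately represents $F_k$. With that in place, converting the $\pm1$-valuedness of the process into the improvement from the trivial bound $1$ to $2/\pi$ costs no more than the elementary observation that $|\frac{1}{2\pi}\int_0^{2\pi}\rme^{-iks}h(s)\,\rmd s|\le 2/\pi$ for every $\pm1$-valued $h$ and every $k\ne0$.
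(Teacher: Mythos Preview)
Your argument is correct and follows a genuinely different route from the paper's. The paper deliberately avoids writing down the pathwise integral $F_k$ and instead discretises: it expresses $c_k$ as a limit of Riemann sums $\langle|S_N|^2\rangle$ with $S_N=\frac{1}{N}\sum_n \rme^{-ik2\pi n/N}f_{2\pi n/N}$, applies the phase-rotation trick to each finite sum, bounds by a Riemann sum for $\frac{1}{2\pi}\int_0^{2\pi}|\cos(kt)|\,\rmd t$, and then passes to the limit. Immediately after the proof the paper remarks that it did \emph{not} interchange integration with expectation, warning that ``the realizations $t\mapsto f_t(\omega)$ need not be integrable for almost all $\omega$,'' and that if the interchange were legitimate ``the proof gets abbreviated considerably.'' Your proof is precisely that abbreviated version, and your passage to a jointly measurable modification (available because mean-square-continuity implies stochastic continuity) is exactly the missing ingredient that makes the Fubini step rigorous. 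So your approach buys brevity and conceptual clarity at the price of invoking a standard but nontrivial existence theorem; the paper's approach is longer but entirely elementary.

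One point worth flagging: for $k=0$ neither proof is covered by the hypotheses as literally stated. The paper simply asserts $c_0=\frac{1}{2\pi}\int_0^{2\pi}C(t)\,\rmd t=0$, and you invoke the antipodal relation $f_{s+\pi}=-f_s$; both are properties of the singlet model on the circle, not consequences of the lemma's hypotheses alone (the constant process $f_s\equiv1$ satisfies all stated hypotheses yet has $c_0=1$). You were right to make this extra assumption explicit.
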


\begin{proof}
Since the function $C$ is continuous, its Fourier coefficients exist. Since
$C$ is real valued and even, there holds $\overline{c_{k}}=c_{-k}=c_{k}.$ Thus
the mapping $k\mapsto c_{k}$ is real valued and even too. In particular
because of $c_{k}\in\mathbb{R}$ we have%
\begin{eqnarray*}
\int_{0}^{2\pi}\rme^{-\rmi kt}C\left(  t\right)  \rmd t  & =\int_{-\pi
}^{\pi}\rme^{-\rmi kt}C\left(  t\right)  \rmd t=\int_{-\pi}^{\pi}\left(
\cos\left(  kt\right)  -\rmi\sin\left(  kt\right)  \right)  C\left(  t\right)
\rmd t\\
& =\int_{-\pi}^{\pi}\cos\left(  kt\right)  C\left(  t\right)  \rmd t%
=\int_{0}^{2\pi}\cos\left(  kt\right)  C\left(  t\right)  \rmd t.
\end{eqnarray*}
Now for the estimate $c_{k}\geq0:$%
\begin{eqnarray*}
c_{k}  & =\frac{1}{2\pi}\int_{0}^{2\pi}\rme^{-\rmi kt}\left\langle f_{s}%
f_{s+t}\right\rangle \rmd t=\frac{1}{2\pi}\int_{0}^{2\pi}\rme^{-\rmi kt}\left(
\frac{1}{2\pi}\int_{0}^{2\pi}\left\langle f_{s}f_{s+t}\right\rangle
\rmd s\right)  \rmd t\\
& =\left(  \frac{1}{2\pi}\right)  ^{2}\int_{0}^{2\pi}\rme^{\rmi ks}\left(  \int
_{0}^{2\pi}\rme^{-\rmi k\left(  s+t\right)  }\left\langle f_{s}f_{s+t}\right\rangle
\rmd t\right)  \rmd s\\
& =\left(  \frac{1}{2\pi}\right)  ^{2}\int_{0}^{2\pi}\rme^{\rmi ks}\left(  \int
_{0}^{2\pi}\rme^{-\rmi kr}\left\langle f_{s}f_{r}\right\rangle \rmd r\right)
\rmd s\\
& =\lim_{N\rightarrow\infty}\sum_{m=1}^{N}\sum_{n=1}^{N}\frac{\exp\left(\rmi k2\pi
\frac{m}{N}\right)\exp\left(-\rmi k2\pi\frac{n}{N}\right)}{N^{2}}\left\langle f_{2\pi\frac{m}{N}%
}f_{2\pi\frac{n}{N}}\right\rangle \\
& =\lim_{N\rightarrow\infty}\left\langle \left\vert \sum_{n=1}^{N}%
\frac{\exp\left(-\rmi k2\pi\frac{n}{N}\right)}{N}f_{2\pi\frac{n}{N}}\right\vert ^{2}%
\right\rangle \geq0.
\end{eqnarray*}

The last line's sum can be bounded from above as follows. First observe that
for all $\tau\in\mathbb{R}$ it holds%
\[
\left\vert \sum_{n=1}^{N}\frac{\exp\left(-\rmi k2\pi\frac{n}{N}\right)}{N}f_{2\pi\frac{n}{N}%
}\right\vert =\left\vert \rme^{\rmi k\tau}\sum_{n=1}^{N}\frac{\exp\left(-\rmi k2\pi\frac{n}{N}\right)%
}{N}f_{2\pi\frac{n}{N}}\right\vert .
\]
Now there exists a number $\tau_{N}\in\left[  0,\frac{2\pi}{k}\right)  $
depending on $N,$ such that%
\[
0\leq \rme^{\rmi k\tau_{N}}\sum_{n=1}^{N}\frac{\exp\left(-\rmi k2\pi\frac{n}{N}\right)}{N}f_{2\pi
\frac{n}{N}}=\left\vert \rme^{\rmi k\tau_{N}}\sum_{n=1}^{N}\frac{\exp\left(-\rmi k2\pi\frac
{n}{N}\right)}{N}f_{2\pi\frac{n}{N}}\right\vert
\]
holds. It then follows for such $\tau_{N}$ that%
\[
\left\vert \sum_{n=1}^{N}\frac{\exp\left(-\rmi k2\pi\frac{n}{N}\right)}{N}f_{2\pi\frac{n}{N}%
}\right\vert =\sum_{n=1}^{N}\frac{\cos\left(  k\left(  2\pi\frac{n}{N}%
-\tau_{N}\right)  \right)  }{N}f_{2\pi\frac{n}{N}}.
\]
Since $f_{2\pi\frac{n}{N}} $ assumes values from $\left\{  -1,1\right\}$  only, we obtain
from the triangle inequality%
\[
\sum_{n=1}^{N}\frac{\cos\left(  k\left(  2\pi\frac{n}{N}-\tau_{N}\right)
\right)  }{N}f_{2\pi\frac{n}{N}}\leq\frac{1}{2\pi}\sum_{n=1}^{N}\frac
{2\pi\left\vert \cos\left(  k\left(  2\pi\frac{n}{N}-\tau_{N}\right)  \right)
\right\vert }{N}.
\]

The points $\mathcal{Z}_{N}=\left\{  2\pi\frac{n}{N}-\tau_{N}:n=1,\ldots
N\right\}  $ partition the interval $\left[  0,2\pi\right]  -\tau_{N}$ whose
length is $2\pi.$ The mesh $2\pi/N$ of $\mathcal{Z}_{N}$ tends to zero for
$N\rightarrow\infty.$ Furthermore $2\pi$ is a periode of the function
$x\mapsto\left\vert \cos\left(  kx\right)  \right\vert .$ Therefore the sum on
the right hand side of this last inequality converges towards the Riemannian
integral%
\[
\frac{1}{2\pi}\int_{0}^{2\pi}\left\vert \cos\left(  kt\right)  \right\vert
\rmd t
\]
for $N\rightarrow\infty.$ For each $\varepsilon>0$ there thus exists a number
$N_{0},$ such that for all $N>N_{0}$%
\[
\sum_{n=1}^{N}\frac{\left\vert \cos\left(  k2\pi\frac{n}{N}-\tau_{N}\right)
\right\vert }{N}\leq\frac{1}{2\pi}\int_{0}^{2\pi}\left\vert \cos\left(
kt\right)  \right\vert \rmd t+\varepsilon.
\]

For $k\in\mathbb{Z}\smallsetminus0$ we have%
\[
\int_{0}^{2\pi}\left\vert \cos\left(  kt\right)  \right\vert \rmd t%
=4k\int_{0}^{\frac{\pi}{2k}}\cos\left(  kt\right)  \rmd t=4\left.
\sin\left(  kt\right)  \right\vert _{0}^{\pi/2k}=4.
\]
Thus for any $\varepsilon>0$ there exists a $N_{0},$ such that for all
$N>N_{0}$ holds%
\[
\left\vert \sum_{n=1}^{N}\frac{\exp\left(-\rmi k2\pi\frac{n}{N}\right)}{N}f_{2\pi\frac{n}{N}%
}\right\vert \leq\frac{2}{\pi}+\varepsilon.
\]
We therefore have proven for any $k\in\mathbb{Z}\smallsetminus0$ that%
\[
c_{k}=\lim_{N\rightarrow\infty}\left\langle \left\vert \sum_{n=1}^{N}%
\frac{\exp\left(-\rmi k2\pi\frac{n}{N}\right)}{N}f_{2\pi\frac{n}{N}}\right\vert ^{2}%
\right\rangle \leq\left(  \frac{2}{\pi}\right)  ^{2}.
\]
For $k=0$ the estimate $c_{0}\leq\left(  \frac{2}{\pi}\right)  ^{2}$ follows
from
\[
c_{0}=\frac{1}{2\pi}\int_{0}^{2\pi}C\left(  t\right)  \rmd t=0.
\]

\end{proof}

Note that in this proof we did not interchange the limiting process of
integration with the probabilistic expectation value, which in general also
involves a limit process. Such an interchange can be misleading since the
realizations $t\mapsto f_{t}\left(  \omega\right)  $ need not be integrable
for almost all $\omega\in\Omega.$ If, however, the two limits can be
interchanged, the proof gets abbreviated considerably.\cite{Zuk93},
\cite{Wur10}

From the lemma's estimate for the case $k=1$ namely $0\leq c_{1}\leq\left(
\frac{2}{\pi}\right)  ^{2}$ we now obtain our lower bound for the $L^{2}%
$-distance between the quantum mechanical spin singlet correlation function
and its classical approximants.

\begin{proposition}
Let $\left\{  f_{s}:s\in\mathbb{R}\right\}  $ be a $2\pi$-periodic, $\left\{
1,-1\right\}  $-valued, mean-square-con\-tinuous stochastic process such that
$C\left(  t\right)  =\left\langle f_{s}f_{s+t}\right\rangle $ holds for all
$s,t\in\mathbb{R}.$ Then the mean square deviation of $C$ from the quantum
mechanical correlation function $\cos$ obeys%
\[
\left\Vert C-\cos\right\Vert =\sqrt{\frac{1}{2\pi}\int_{0}^{2\pi}\left(
C\left(  t\right)  -\cos\left(  t\right)  \right)  ^{2}\rmd t}\geq
\frac{1-\frac{8}{\pi^{2}}}{\sqrt{2}}\approx0.133\,95.
\]

\end{proposition}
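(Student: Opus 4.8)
The plan is to expand the difference $C-\cos$ into its Fourier series and to apply Parseval's identity, retaining only the contribution of the two modes $k=\pm1$. To set this up I would first record that, with the normalisation of lemma~5, one has $C(t)=\sum_{k\in\mathbb{Z}}c_{k}\rme^{\rmi kt}$, whereas $\cos t=\frac{1}{2}\rme^{\rmi t}+\frac{1}{2}\rme^{-\rmi t}$ has exactly the two nonvanishing Fourier coefficients $\frac{1}{2}$ at $k=\pm1$. Since $C$ is even we also have $c_{1}=c_{-1}$. Hence the Fourier coefficients of $C-\cos$ are $c_{k}$ for $k\neq\pm1$ and $c_{1}-\frac{1}{2}$ at $k=\pm1$.

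Next I would invoke Parseval's theorem for the continuous, $2\pi$-periodic function $C-\cos$, which gives
\[
\left\Vert C-\cos\right\Vert ^{2}=\frac{1}{2\pi}\int_{0}^{2\pi}\left(C\left(t\right)-\cos t\right)^{2}\rmd t=\sum_{k\neq\pm1}c_{k}^{2}+2\left(c_{1}-\frac{1}{2}\right)^{2}\geq2\left(\frac{1}{2}-c_{1}\right)^{2}.
\]
Now I apply lemma~5 in the case $k=1,$ which yields $0\leq c_{1}\leq\left(\frac{2}{\pi}\right)^{2}=\frac{4}{\pi^{2}}.$ Because $\frac{4}{\pi^{2}}<\frac{1}{2},$ the number $\frac{1}{2}-c_{1}$ is nonnegative, hence bounded below by $\frac{1}{2}-\frac{4}{\pi^{2}}>0.$ Taking square roots in the inequality above then produces
\[
\left\Vert C-\cos\right\Vert \geq\sqrt{2}\left(\frac{1}{2}-\frac{4}{\pi^{2}}\right)=\frac{1-\frac{8}{\pi^{2}}}{\sqrt{2}}\approx0.133\,95,
\]
which is exactly the asserted bound.

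There is no serious obstacle left once lemma~5 is available: the substance of the argument is the coefficient estimate $c_{1}\leq4/\pi^{2},$ which lies strictly below the value $1/2$ that $\cos$ carries at that mode. The only points needing a little care are checking that $4/\pi^{2}<1/2,$ so that discarding the absolute value is legitimate and the resulting bound is genuinely positive; keeping track of the factor $2$ in Parseval coming from the pair of modes $k=1$ and $k=-1$; and confirming the elementary identity $\sqrt{2}\left(\frac{1}{2}-\frac{4}{\pi^{2}}\right)=\left(1-\frac{8}{\pi^{2}}\right)/\sqrt{2}.$ If desired, the bound could in principle be tightened further by also exploiting the discarded terms $\sum_{k\neq\pm1}c_{k}^{2}$ together with the constraint $C\left(0\right)=\sum_{k}c_{k}=1,$ but such a refinement is not needed for the inequality stated here.
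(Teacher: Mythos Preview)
Your proof is correct and is essentially the same argument as the paper's. The paper phrases the step as an orthogonal projection of $C$ onto the one-dimensional span of $\cos$ (so that $\left\Vert C-\cos\right\Vert^{2}\geq\frac{1}{2}\left(1-2\left\langle\cos,C\right\rangle\right)^{2}=2\left(\frac{1}{2}-c_{1}\right)^{2}$), whereas you reach the identical inequality by writing out the full Parseval sum and discarding the modes $k\neq\pm1$; the two formulations are the same projection in different language.
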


\begin{proof}
Note that%
\[
\left\Vert \cos\right\Vert ^{2}=\frac{1}{2\pi}\int_{0}^{2\pi}\cos^{2}\left(
t\right)  \rmd t=\frac{1}{2}.
\]
We thus can decompose $C$ into a component parallel to $\cos$ and one
orthogonal to it according to%
\[
C=\sqrt{2}\cos\left\langle \sqrt{2}\cos,C\right\rangle +\left(  C-\sqrt{2}%
\cos\left\langle \sqrt{2}\cos,C\right\rangle \right)  .
\]
Here the scalar product between two continuous functions $f,g:\mathbb{R}%
\rightarrow\mathbb{C}$ with a periode of $2\pi$ is denoted by%
\[
\left\langle f,g\right\rangle =\frac{1}{2\pi}\int_{0}^{2\pi}\overline{f\left(
t\right)  }g\left(  t\right)  \rmd t.
\]
From the estimate
\[
c_{1}=\frac{1}{2\pi}\int_{0}^{2\pi}\cos\left(  t\right)  C\left(  t\right)
\rmd t\leq\frac{4}{\pi^{2}}
\]
it follows that
\begin{eqnarray*}
\left\Vert C-\cos\right\Vert ^{2}  & =\frac{1}{2\pi}\int_{0}^{2\pi}\left(
C\left(  t\right)  -\cos\left(  t\right)  \right)  ^{2}\rmd t\\
& =\left\Vert C-\sqrt{2}\cos\left\langle \sqrt{2}\cos,C\right\rangle
\right\Vert ^{2}+\left\Vert \sqrt{2}\cos\left\langle \sqrt{2}\cos
,C\right\rangle -\cos\right\Vert ^{2}\\
& \geq\left\Vert \sqrt{2}\cos\left\langle \sqrt{2}\cos,C\right\rangle
-\cos\right\Vert ^{2}\\
& =\left(  1-2\left\langle \cos,C\right\rangle \right)  ^{2}\left\Vert
\cos\right\Vert ^{2}\\
& =\frac{1}{2}\left(  1-2\left\langle \cos,C\right\rangle \right)  ^{2}.
\end{eqnarray*}
Because of $0\leq2\left\langle \cos,C\right\rangle =2c_{1}\leq\frac{8}{\pi
^{2}}=0.810\,57$ we finally have%
\[
\left\Vert C-\cos\right\Vert \geq\frac{1-\frac{8}{\pi^{2}}}{\sqrt{2}%
}=0.133\,95.
\]

\end{proof}

From this proof it is obvious that the estimate $\left\Vert C-\cos\right\Vert
\geq\left(  1-\frac{8}{\pi^{2}}\right)  /\sqrt{2}$ is saturated if and only if
$C$ is proportional to $\cos$ which, because of $C\left(  0\right)  =1,$ in
turn implies $C=\cos.$ Thus the estimate cannot be saturated and stronger
estimates might exist.

\section{Bell's example}

Bell \cite{Bel64} constructed a local spin singlet model with the $2\pi
$-periodic autocorrelation function given by $C:\mathbb{R}\rightarrow\left[
-1,1\right]  $:%
\[
C\left(  t\right)  =1-2\frac{\left\vert t\right\vert }{\pi}\mbox{ for }
0\leq\left\vert t\right\vert \leq\pi.
\]
$C$ is continuous and even. Bell's stochastic variables $\left\{  f_{a}%
:a\in\mathbb{S}^{2}\right\}  $ are defined on the set $\mathbb{S}^{2}$ endowed
with the uniform distribution. They are given through%
\[
f_{a}\left(  \omega\right)  =\left\{
\begin{array}
[c]{rl}%
1 & \mbox{for }\omega\cdot a>0\\
-1 & \mbox{otherwise}
\end{array}
\right\}  ,
\]
and indeed yield $\left\langle f_{a}f_{b}\right\rangle =1-2\frac{\theta}{\pi}$
with $\theta\in\left[  0,\pi\right]  $ such that $a\cdot b=\cos\theta$ holds.

The Fourier coefficient $c_{k}$ of $C$ is given for $k\in\mathbb{Z}%
\smallsetminus0$ through%
\begin{eqnarray*} 
c_{k}  & =\frac{1}{2\pi}\int_{-\pi}^{\pi}\rme^{-\rmi kt}\left(  1-2\frac{\left\vert
t\right\vert }{\pi}\right)  \rmd t=\frac{1}{2\pi}\int_{-\pi}^{\pi}%
\cos\left(  kt\right)  \left(  1-2\frac{\left\vert t\right\vert }{\pi}\right)
\rmd t\\
& =\frac{1}{\pi}\int_{0}^{\pi}\cos\left(  kt\right)  \left(  1-2\frac{t}{\pi
}\right)  \rmd t=\frac{1}{k\pi}\int_{0}^{\pi}\frac{d}{dt}\left[
\sin\left(  kt\right)  \right]  \left(  1-2\frac{t}{\pi}\right)  \rmd t\\
& =\frac{1}{k\pi}\left\{  \left[  \sin\left(  kt\right)  \left(  1-2\frac
{t}{\pi}\right)  \right]  _{0}^{\pi}+\frac{2}{\pi}\int_{0}^{\pi}\sin\left(
kt\right)  \rmd t\right\} \\
& =-\frac{2}{\left(  k\pi\right)  ^{2}}\left.  \cos\left(  kt\right)
\right\vert _{0}^{\pi}=\frac{2}{\left(  k\pi\right)  ^{2}}\left[  1-\left(
-1\right)  ^{k}\right] \\
& =\left\{
\begin{array}
[c]{cl}%
\frac{4}{\left(  k\pi\right)  ^{2}} & \mbox{for odd }k\\
0 & \mbox{for even }k
\end{array}
\right\}  .
\end{eqnarray*}
Obviously $c_{0}=0$ holds. The mapping $k\mapsto c_{k}$ indeed is real valued
and even. The estimate $0\leq c_{k}\leq4/\pi^{2}$ is realized and in the case
of $c_{1}$ saturated. Therefore it holds that%
\begin{eqnarray*}
\left\Vert C-\cos\right\Vert ^{2}  & =\left\Vert C\right\Vert ^{2}+\left\Vert
\cos\right\Vert ^{2}-2\left\langle \cos,C\right\rangle \\
& =\left\Vert C\right\Vert ^{2}+\frac{1}{2}-2c_{1}\\
& =\left\Vert C\right\Vert ^{2}+\frac{1}{2}-\frac{8}{\pi^{2}}.
\end{eqnarray*}

The value of $\left\Vert C\right\Vert ^{2}$ is given by%
\begin{eqnarray*}
\left\Vert C\right\Vert ^{2}  & =\frac{1}{2\pi}\int_{-\pi}^{\pi}C^{2}\left(
t\right)  \rmd t=\frac{1}{\pi}\int_{0}^{\pi}C^{2}\left(  t\right)
\rmd t\\
& =\frac{1}{\pi}\int_{0}^{\pi}\left(  1-2\frac{t}{\pi}\right)  ^{2}%
\rmd t=\int_{0}^{1}\left(  1-2x\right)  ^{2}\rmd x\\
& =\frac{1}{2}\int_{-1}^{1}y^{2}\rmd y=\int_{0}^{1}y^{2}\rmd y=\frac
{1}{3}.
\end{eqnarray*}
Thus we have%
\[
\left\Vert C-\cos\right\Vert ^{2}=\frac{1}{3}+\frac{1}{2}-\frac{8}{\pi^{2}%
}=\frac{5}{6}-\frac{8}{\pi^{2}},
\]
and in consequence%
\[
\left\Vert C-\cos\right\Vert =\sqrt{\frac{5}{6}-\frac{8}{\pi^{2}}}=0.150\,88.
\]

Note that $C$ has the following particularly simple uniformly converging
Fourier series representation%
\begin{eqnarray*}
C\left(  t\right)   & =\sum_{k=1}^{\infty}\left(  c_{k}\rme^{ikt}+c_{-k}%
\rme^{-ikt}\right)  =\sum_{k=1}^{\infty}c_{k}\left(  \rme^{ikt}+\rme^{-ikt}\right) \\
& =\sum_{k=1}^{\infty}2c_{k}\cos\left(  kt\right)  =\sum_{k=0}^{\infty
}2c_{2k+1}\cos\left(  \left(  2k+1\right)  t\right) \\
& =\frac{8}{\pi^{2}}\sum_{k=0}^{\infty}\frac{\cos\left(  \left(  2k+1\right)
t\right)  }{\left(  2k+1\right)  ^{2}}\\
& =\frac{8}{\pi^{2}}\left[  \cos\left(  t\right)  +\frac{\cos\left(
3t\right)  }{3^{2}}+\frac{\cos\left(  5t\right)  }{5^{2}}+\ldots\right]  .
\end{eqnarray*}

\ack

We are indebted to Gregor Weihs for advising us of reference \cite{Zuk93}. Critical remarks by Markus Penz and Tobias Griesser on an earlier version of the manuscript have been helpful.
\section*{References}

\end{document}